\providecommand{\keywords}[1]{\textbf{\textit{Keywords:}} #1}
\providecommand{\jel}[1]{\textbf{\textit{JEL Classifications:}} #1}
\DeclareMathOperator\supp{supp}
\DeclareMathOperator\inter{int}
\DeclareMathOperator*{\conv}{conv}
\DeclareMathOperator*{\extreme}{extreme}
\newtheorem{theorem}{Theorem}[section]
\newtheorem{proposition}[theorem]{Proposition}
\newtheorem{lemma}[theorem]{Lemma}
\newtheorem{corollary}[theorem]{Corollary}
\newtheorem{definition}[theorem]{Definition}
\author{Mark Whitmeyer\thanks{Arizona State University. Email: \href{mailto:mark.whitmeyer@gmail.com}{mark.whitmeyer@gmail.com}. I thank Wenhao Li, Joseph Whitmeyer, Tom Wiseman, and Kun Zhang for their comments.}}
\title{Can One Hear the Shape of a Decision Problem?}
\begin{document}
\begin{abstract}
We explore the connection between an agent's decision problem and her ranking of information structures. We find that a finite amount of ordinal data on the agent's ranking of experiments is enough to identify her (finite) set of undominated actions (up to relabeling and duplication) and  the beliefs rendering each such action optimal. An additional smattering of cardinal data, comparing the relative value to the agent of finitely many pairs of experiments, identifies her utility function up to an action-independent payoff.
\end{abstract}
\keywords{Expected Utility, Value of Information, Rationalizability}\\
\jel{C72; D81; D82; D83}

\section{Motivation}

To what extent does an agent's ranking of information structures specify the decision problem she faces? We know she is a Bayesian expected-utility maximizer and that her conception of the world has finitely many states. We, furthermore, know her prior (though this is inessential) and that only finitely many of her available actions are ever optimal; \textit{viz.,} are undominated. What we do not know is her utility function itself.

The goal of this work is to understand the connection between an agent's value for information and her decision problem. Obviously, knowing an agent's decision problem and prior fully specifies her value for information: her (expected) utility gain from observing any Blackwell experiment (\cite*{blackwell}, \cite*{blackwell2}) can be computed with ease. Any two experiments can, therefore, be ranked. Here, we study the inverse problem: suppose we know a finite collection of the agent's comparative ranks of experiments. Does that pin down the agent's utility function?

The three main results of this paper are two positive ones, with a negative one sandwiched in between. In the first, Lemma \ref{cidentify}, we discover that a finite collection of ranked experiments identifies the agent's undominated actions (up to relabeling and duplication) as well as the regions of beliefs on which each such action is optimal. The second result, Proposition \ref{nonidentification}, reveals an important indeterminacy about the agent's utility function that cannot be resolved with only ordinal rankings of experiments. The third result, Theorem \ref{theorem}, states that an additional finite collection of utility differences between experiments (the difference in utils from observing one experiment versus another) identifies the agent's utility function up to a decision-irrelevant payoff.

These results show that an agent's preferences for information reveal a significant amount about her decision problem. A finite amount of ordinal information alone is enough to completely characterize the agent's behavior from observing any information structure. An additional finite number of utility differences yields the agent's exact value for information, for any comparison of information structures. The rationale behind these results is that expected utility maximization imposes a significant amount of structure on how an agent values information.

As two-player (simultaneous-move) games, fixing the behavior of one player, are decision problems from the perspective of the other player, we also find that a finite collection of statements about the intersections of best-response sets is enough to characterize a two-player game, up to inessential details. That is, such a finite collection identifies each player's set of rationalizable actions (up to duplication and relabeling), as well as the sets of beliefs rationalizing each such action. This, in turn, pins down the Nash equilibria of the game.

\section{The Question(s) at Hand}

There is a decision-maker (DM) encountering a static decision problem, which we define to be a triple \(\left(\Theta, A, u\right)\) consisting of a finite set of possible states \(\Theta\), a nonempty set of actions \(A\), and a bounded utility function \(u \colon A \times \Theta \to \mathbb{R}\). We, the observers, know \(\Theta\) and the DM's full support prior \(\mu \in \Delta\left(\Theta\right) \equiv \Delta\). We also know that the DM is a Bayesian expected-utility maximizer whose utility function is bounded, only finitely many of her actions are undominated, and no two undominated actions have identical payoffs in every state.\footnote{Alternatively, we resign ourselves to being unable to distinguish between actions that are exact duplicates. This is a genericity requirement.}

Furthermore, we also have a finite collection of the DM's rankings of signals. That is, there is a finite collection of Blackwell experiments \(\pi_1, \dots, \pi_m\) and rankings of the form \(\pi_i \succ \pi_j\) and \(\pi_k \sim \pi_l\) (\(i, j, k, l \in \left\{1, \dots, m\right\}\)); meaning the DM strictly prefers experiment \(\pi_i\) to \(\pi_j\) and is indifferent between experiments \(\pi_k\) and \(\pi_l\), respectively.

\begin{tcolorbox}[colframe=Aquamarine,colback=white] \textbf{Identifying the decision problem:} is this data--the finite collection of ranked experiments, the prior, that only finitely many actions are undominated, etc.--enough to pin down the decision problem up to an action-irrelevant payoff (and to relabeling of the actions)?
\end{tcolorbox}

If this data is insufficient, will a finite supplement of cardinal information suffice in the following sense? For a Blackwell experiment \(\pi\) and a prior \(\mu\), let \(X\) be the random posterior produced by Bayesian updating, which is distributed according to \(F = B(\mu,\pi)\) (where \(B\) is the Bayes map).\footnote{We borrow this notation from \cite*{denti1} and \cite*{denti2}.} Then, let \(W(\pi) \in \mathbb{R}\) denote the DM's Expected payoff in her decision problem from first observing the outcome of experiment \(\pi\) then taking an optimal action: 
\[W(\pi) = \mathbb{E}_{F}\max_{a \in A} \mathbb{E}_x u\left(a,\theta\right)\text{.}\]
Suppose we also have a finite collection of Blackwell experiments \(\pi_1, \dots, \pi_m\), real numbers \(\zeta_1, \dots, \zeta_k\), and equalities \(W(\pi_i) = W(\pi_j) + \zeta_l\). Note that we do not know \(W\), this data consists of statements ``the DM's utility gain from \(\pi_j\) to \(\pi_i\) is \(\zeta_l\) utils.'' Now can we identify the state-dependent utility up to a decision-irrelevant random payoff (and to relabeling of the actions)?

\subsection{Recasting the Question}

Given the set of states \(\Theta\), a set of actions \(A\), and a bounded state-dependent utility function \(u \colon A \times \Theta \to \mathbb{R}\), the DM's value function in belief \(x \in \Delta\) is
\[V\left(x\right) \coloneqq \max_{a \in A}
\mathbb{E}_{x}u\left(a,\theta\right) \text{.}\]
There are \(n \geq 2\) states (\(\left|\Theta\right| = n\)), so \(\Delta \subset \mathbb{R}^{n-1}\). 

An action \(a \in A\) is undominated if there exists a belief \(x \in \Delta\) such that  \[\mathbb{E}_{x}u\left(a,\theta\right) > \mathbb{E}_{x}u\left(b,\theta\right), \ \text{for all} \ b \in A \setminus \left\{a\right\}\text{.}\]
As only finitely many actions are undominated, \(V\) is the maximum of finitely many hyperplanes, i.e., is convex and piecewise affine.

As noted before, via Bayes' law, given prior \(\mu \in \Delta\) any experiment \(\pi\) corresponds to a random vector \(X\) supported on \(\Delta\) whose expectation is the DM's prior \(\mu\). Accordingly, we have a finite collection (\(m \in \mathbb{Z}_{++}\)) of equalities and strict inequalities of the form \(\mathbb{E}_{F_i} V(X_i) > \mathbb{E}_{F_j} V(X_j)\) and \(\mathbb{E}_{F_k} V(X_k) = \mathbb{E}_{F_l} V(X_l)\) (\(i, j, k, l \in \left\{1, \dots, m\right\}\)). We call this a collection of ordered expectations.

\begin{tcolorbox}[colframe=Aquamarine,colback=white] \textbf{Identifying the decision problem:} is this data--the collection of ordered expectations, and that \(V\) is the maximum of finitely many affine functions--enough to pin down the decision problem? That is, is there such a collection that is satisfied by a piecewise-affine \(V\); and, moreover, if \(W\) satisfies the collection, then \(W = V + \varphi\), where \(\varphi \colon \Delta \to \mathbb{R}\) is affine? What if we also have a finite supplement of additional \textit{cardinal} data of the form \(\mathbb{E}_{F_i} V(X_i) = \mathbb{E}_{F_j} V(X_j) + \zeta_l\)? Will that pin down \(V + \varphi\)?
\end{tcolorbox}

\section{Answering the Question}

Any such value function \(V\) can be projected onto the belief simplex, \(\Delta\), which yields a finite collection \(C\) of polytopes of full dimension \(C_i\) (\(i = 1, \dots, t\)), where \(t\) is the number of undominated actions in \(A\). Formally, for each undominated \(a_i \in A\) (\(i = \left\{1,\dots, t\right\}\)), 
\[C_i \coloneqq \left\{x \in \Delta \ \vert \ \mathbb{E}_x u\left(a_i,\theta\right) = V(x)\right\} \text{.}\]
Action \(a_i\) is optimal for any belief \(x \in C_i\) and uniquely optimal for any belief in the relative interior of \(C_i\) (for any \(x \in \inter C_i\)). The collection \(C\) is a regular polyhedral subdivision (henceforth, just ``subdivision'') of \(\Delta\). We say that function \(V\) induces subdivision \(C\).

\begin{definition}
    A collection of ordered expectations identifies subdivision \(C\) if there exists a \(V\) that satisfies the collection that induces \(C\), and any \(V\) that satisfies the collection of ordered expectations induces \(C\).
\end{definition}
We can always find a subdivision-identifying collection of ordered expectations.
\begin{lemma}\label{cidentify}
    For any \(C\), there exists a collection of ordered expectations that identifies it.
\end{lemma}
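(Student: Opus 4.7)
Fix $V_0$ inducing $C = \{C_1, \ldots, C_t\}$ with pairwise distinct affine pieces $\ell_1, \ldots, \ell_t$ on the cells (distinctness follows from the non-duplicate-actions assumption). I will build the collection in two phases: equalities forcing any admissible $W$ to be affine on each cell $C_i$, then strict inequalities forcing its affine pieces to differ across adjacent cells. Every constraint uses an outer-point trick: to encode an identity of the form $W(y) = \sum_k \alpha_k W(x_k)$ with $y = \sum_k \alpha_k x_k$ a convex combination in $\Delta$, I pair it with an outer posterior $z := (\mu - \beta y)/(1-\beta)$---well-defined in $\Delta$ for $\beta$ small, since $\mu$ has full support---so that the two resulting experiments both have mean $\mu$ and the equality of their expectations reduces to the desired identity for $W$.

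\textbf{Phase 1: affinity on each cell.} For each cell $C_i$ and each pair of vertices $w_j, w_k$ of $C_i$, apply the construction to the midpoint identity $W(\tfrac{1}{2}(w_j + w_k)) = \tfrac{1}{2}(W(w_j) + W(w_k))$. Convexity of $W$ turns this into Jensen's inequality with equality at an interior point of the chord, forcing $W$ affine on the whole segment $[w_j, w_k]$. Doing so for every vertex pair of $C_i$, an induction on dimension then forces $W$ affine on all of $C_i$: any kink hyperplane of $W$ inside $C_i$ would either hit the relative interior of some proper face of $C_i$ (contradicting affinity on that face, inductively ensured by the vertex-pair chords within it) or separate two vertices of $C_i$ (contradicting affinity on their chord). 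Hence $W\vert_{C_i} = \lambda_i$ for some affine $\lambda_i$.

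\textbf{Phase 2: distinct pieces across adjacent cells.} For each pair of facet-adjacent cells $C_i, C_j$, pick $b$ in the relative interior of the shared facet together with $p_1 \in \inter C_i$ and $p_2 \in \inter C_j$ satisfying $b = \alpha p_1 + (1-\alpha) p_2$ for some $\alpha \in (0,1)$. Let $\pi_1, \pi_2$ be the outer-point-augmented experiments whose posterior distributions place mass $\beta$ on $b$ (for $\pi_1$) or masses $\beta\alpha, \beta(1-\alpha)$ on $p_1, p_2$ (for $\pi_2$); both have mean $\mu$, and $\pi_2$ is a mean-preserving spread of $\pi_1$. The strict inequality $\mathbb{E}[W(\pi_2)] > \mathbb{E}[W(\pi_1)]$ forces $W$ non-affine on the segment $[p_1, p_2]$, i.e., $W$ has a kink crossing the shared facet; combined with Phase 1 this is equivalent to $\lambda_i \neq \lambda_j$ (since $\lambda_i - \lambda_j$ must vanish on the facet, whose transverse direction the segment $[p_1, p_2]$ probes). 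The inequality is satisfied by $V_0$ because $\ell_i - \ell_j$ is nonzero with gradient transverse to the shared facet.

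\textbf{Conclusion and main obstacle.} Any $W$ satisfying the constructed collection is affine on each $C_i$ with $\lambda_i \neq \lambda_j$ on every adjacent pair. A short convexity argument promotes this to distinctness for every pair: if $\lambda_i = \lambda_j$ for some non-adjacent $i, j$, the convex level set $\{x : W(x) = \lambda_i(x)\}$ would contain $\conv(C_i \cup C_j)$ and hence meet the interior of some intermediate cell $C_k$ adjacent to $C_i$ in a full-dimensional set, forcing $\lambda_k = \lambda_i$ and contradicting Phase 2 for $(i,k)$. So $W$ induces exactly $C$; that $V_0$ itself satisfies every constraint is immediate from its construction. The main obstacle is Phase 1: making the single midpoint-equality per vertex pair really force affinity on the entire cell, which rests on the inductive ``no-kink-hyperplane-separating-vertices'' argument sketched above.
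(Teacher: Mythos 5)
Your Phase 2, the outer-point trick for keeping every constructed experiment Bayes-plausible, and the convexity argument extending distinctness from adjacent to arbitrary pairs of cells are all sound, and Phase 2 is essentially the paper's own construction (a mean-preserving spread splitting mass at an interior point of the shared facet into the two cell interiors). The gap is in Phase 1: for a convex function on a polytope, affinity on every vertex-pair chord does \emph{not} imply affinity on the polytope, so your midpoint equalities are too weak. Concretely, take \(n=3\), let \(C\) be the trivial subdivision \(\left\{\Delta\right\}\) (one undominated action, \(V_0\) affine), and consider \(W(x,y) = \max\left(-x,\,-y,\,x+y-1\right)\) on \(\Delta = \left\{(x,y) : x,y \geq 0,\ x+y \leq 1\right\}\). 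This \(W\) is convex and piecewise affine with three distinct pieces (so it is a legitimate value function with three non-duplicate undominated actions), it vanishes identically on each edge of \(\Delta\) and is therefore affine on every vertex-pair chord—hence satisfies all of your Phase 1 equalities, outer point and all—yet it is strictly negative in the interior and induces a three-cell subdivision rather than \(\left\{\Delta\right\}\). Your ``no kink hyperplane'' induction fails precisely here: the kink locus of a convex piecewise-affine function inside a cell need not be a hyperplane slicing through it; in this example it consists of three segments running from the vertices to the centroid, so it neither meets the relative interior of a proper face nor separates any two vertices.

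The paper closes exactly this hole by using a single Jensen equality per cell that involves \emph{all} extreme points simultaneously: place mass on every point of \(\extreme C_i\) and compare with the mean-preserving contraction that collapses this mass to its barycenter, which lies in \(\inter C_i\). Equality in Jensen at a strict convex combination of all the extreme points forces, via the supporting-hyperplane argument, that \(W\) agrees with a single affine function on all of \(\conv\left(\extreme C_i\right) = C_i\). Replacing your family of pairwise midpoint identities with this one barycentric identity per cell (still a finite collection, and still implementable with your outer-point trick) repairs the argument; the remainder of your proof can then stand as written.
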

\begin{proof}
    Fix an arbitrary \(C\). Observe that for any element \(C_i \in C\), there exists a random vector \(X_i\) with finitely supported distribution \(F_i\) such that part of the support of \(F_i\) is the extreme points of \(C_i\), i.e., \(\extreme C_i \subseteq \supp F_i\). Take another \(\hat{X}_i\), with distribution \(\hat{F}_i\), where \(\hat{F}_i\) is a mean-preserving contraction of \(F_i\) that is identical to \(F_i\) except that the portion of the measure placed on the extreme points of \(C_i\) is collapsed to its barycenter, which by construction is in \(\inter C_i\). Then, \(\mathbb{E}_{F_i} V(X_i) = \mathbb{E}_{\hat{F}_i} V(\hat{X}_i)\) if and only if \(V\) is affine on \(C_i\). Accordingly, for a \(C\) with \(t\) elements, \(t\) such equalities implies \(V\) is affine on each \(C_i \in C\). 

    Next, for any distinct \(C_i, C_j \in C\) with \(\dim \left(C_i \cap C_j\right) = n-2\) and any \(x' \in \inter\left(C_i \cap C_j\right)\), there exists an \(X_k\) with finitely supported distribution \(F_k\) such that \(x' \in \supp F_k\). Take another \(\hat{X}_k\), with distribution \(\hat{F}_k\), where \(\hat{F}_k\) is a mean-preserving spread of \(F_k\) that is identical to \(F_k\) except that the portion of the measure placed on \(x'\) is split into two points \(x_1 \in \inter C_i\) and \(x_2 \in \inter C_j\). Then, \(\mathbb{E}_{F_k} V(X_k) < \mathbb{E}_{\hat{F}_k} V(\hat{X}_k)\) if and only if \(V\) is not affine on \(C_i \cup C_j\). As there are only finitely many pairings of distinct \(C_i\)s, finitely many such strict inequalities imply that \(V\) is not affine on any such pair.
    
    As \(V\) is convex, any \(V\) that satisfies the stated equalities and strict inequalities must induce \(C\). Moreover, for any \(C\), there is a \(V\) that induces it (indeed this is the definition of a regular polyhedral subdivision). \end{proof}
    Figures \ref{fig1} and \ref{fig2} illustrate the proof of Lemma \ref{cidentify}. In Figure \ref{fig1}, revealing that the DM (with prior on the empty-interior green point) is indifferent between the distribution over posteriors with support on the black dots and the distribution over posteriors with support on the black point in the blue region and the purple \(x\) identifies that \(V\) is affine on the red region. In Figure \ref{fig2}, revealing that the DM (with prior on the empty-interior green point) strictly prefers the distribution over posteriors with support on the black and purple \(x\)s to the distribution over posteriors with support on the black dot and the purple \(x\) identifies that \(V\) is not affine on the red and blue regions.

        \begin{figure}
        \centering
        \includegraphics[width=.5\paperwidth]{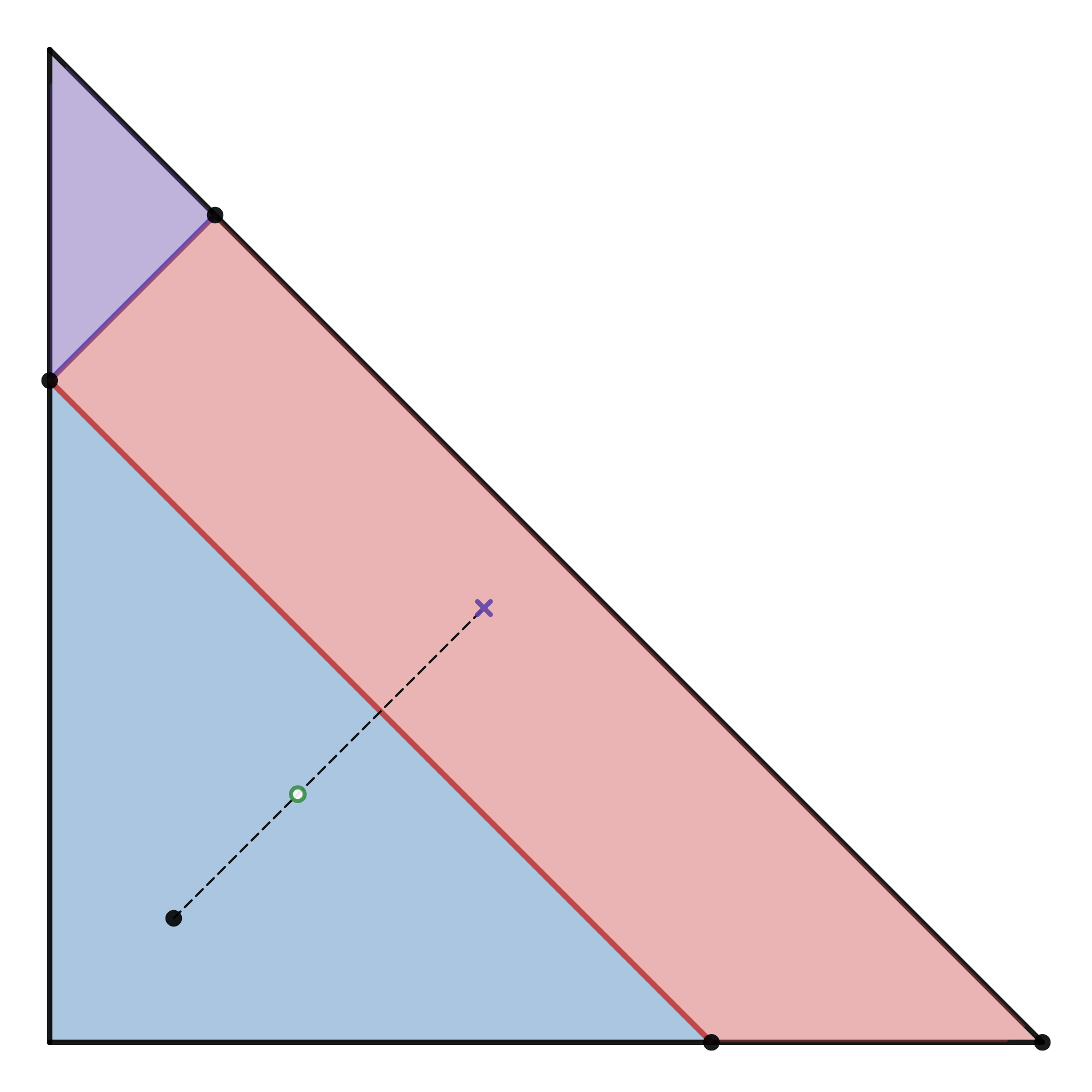}
        \caption{Showing \(V\) is affine on the red region}
        \label{fig1}
    \end{figure}

    \begin{figure}
        \centering
        \includegraphics[width=.5\paperwidth]{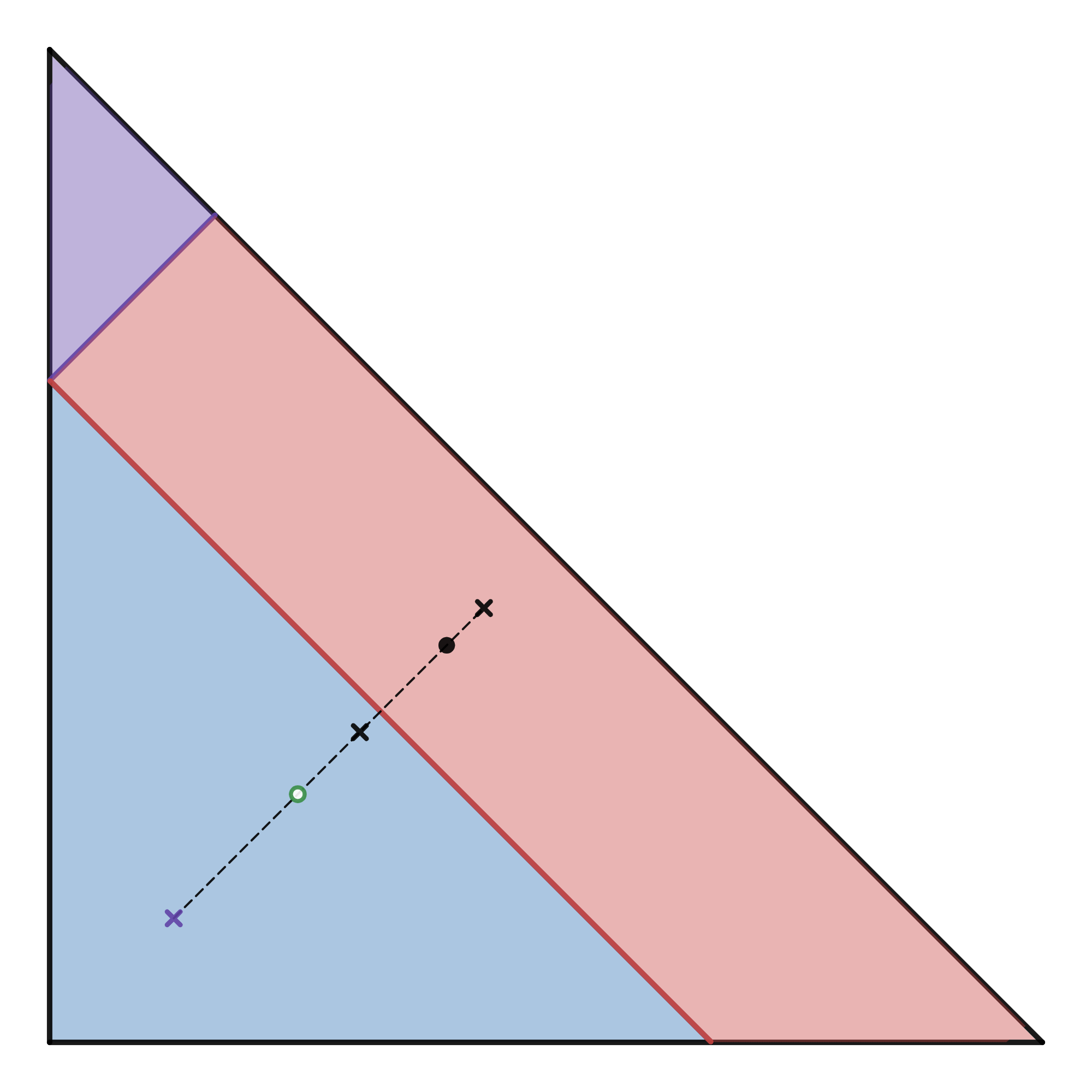}
        \caption{Showing \(V\) is not affine on the red and blue regions}
        \label{fig2}
    \end{figure}

    Next we uncover what the equivalence of value functions means in terms of the state-dependent utilities. In the next lemma, let value function \(\hat{V}\) correspond to utility \(\hat{u}\).
    \begin{lemma}
        \(\hat{V}(x) = V(x) + \varphi(x)\) for affine \(\varphi\) if and only if the sets of undominated actions for both \(\hat{u}(a,\theta)\) and \(u(a,\theta)\) are the same up to relabeling the actions; and \(\hat{u}(a,\theta) = u(a,\theta) + \gamma(\theta)\) for some \(\gamma \colon \Theta \to \mathbb{R}\) for all undominated actions up to relabeling the actions.
    \end{lemma}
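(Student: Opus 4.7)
The plan is to handle the easy ``if'' direction by direct substitution, and the harder ``only if'' direction by first matching up the subdivisions induced by $V$ and $\hat V$, then reading off $\gamma$ cell by cell.

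For the ``if'' direction, suppose the undominated-action sets match under a relabeling and $\hat u(a,\theta)=u(a,\theta)+\gamma(\theta)$ on undominated $a$. Then
\[
\hat V(x)=\max_{a}\mathbb{E}_x\hat u(a,\theta)=\max_{a}\mathbb{E}_x u(a,\theta)+\mathbb{E}_x\gamma(\theta)=V(x)+\varphi(x),
\]
where $\varphi(x)\coloneqq \mathbb{E}_x\gamma(\theta)=\sum_\theta x_\theta\gamma(\theta)$ is affine on $\Delta$.

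For the ``only if'' direction, suppose $\hat V=V+\varphi$ with $\varphi$ affine. First I would note that any affine $\varphi\colon\Delta\to\mathbb{R}$ admits the representation $\varphi(x)=\mathbb{E}_x\gamma(\theta)$ with $\gamma(\theta)\coloneqq \varphi(\delta_\theta)$, since $\Delta$ is the convex hull of its vertices $\{\delta_\theta\}_{\theta\in\Theta}$. Next I would observe that adding an affine function preserves the locus of affinity of a piecewise-affine convex function: a full-dimensional subset of $\Delta$ on which $V$ is affine is exactly a full-dimensional subset on which $\hat V$ is affine. Hence the subdivisions $C$ and $\hat C$ induced by $V$ and $\hat V$ coincide. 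By the genericity assumption (no two undominated actions have identical state-payoffs), each full-dimensional cell is the uniqueness region of a single undominated action, so the equality of subdivisions provides a bijection $a_i\leftrightarrow \hat a_i$ between the undominated-action sets, establishing the relabeling claim.

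It then remains to identify $\gamma$ with the across-action gap. On each cell $C_i$, by construction $V(x)=\mathbb{E}_x u(a_i,\theta)$ and $\hat V(x)=\mathbb{E}_x\hat u(\hat a_i,\theta)$, so
\[
\mathbb{E}_x\hat u(\hat a_i,\theta)=\mathbb{E}_x u(a_i,\theta)+\mathbb{E}_x\gamma(\theta)\quad\text{for all }x\in C_i.
\]
Both sides are affine functions of $x$ that agree on the full-dimensional set $C_i$, so they agree on all of $\Delta$; evaluating at each vertex $\delta_\theta$ yields $\hat u(\hat a_i,\theta)=u(a_i,\theta)+\gamma(\theta)$ for every $\theta$, which is the desired conclusion.

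The main potential obstacle is the middle step: justifying that the two subdivisions coincide and that this gives an action bijection rather than merely a cell bijection. This is where the genericity assumption on payoffs is essential; without it, a cell could host multiple undominated actions and the ``up to relabeling'' language would need to be enlarged to ``up to relabeling and duplication'' as in Lemma \ref{cidentify}. The other routine check is that $\varphi$ defined on $\Delta$ genuinely extends to an expectation against some $\gamma\colon\Theta\to\mathbb{R}$, which follows from the fact that affine functions on a simplex are determined by their values at the vertices.
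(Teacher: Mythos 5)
Your proof is correct, and its core content matches the paper's: both directions hinge on the facts that an affine perturbation of $V$ leaves the induced subdivision unchanged and that, cell by cell, the affine gap between $\hat V$ and $V$ must be the single function $\mathbb{E}_x\gamma(\theta)$. The main difference is organizational: the paper proves the harder direction by contraposition, splitting into the case where the subdivisions differ (which it dispatches by citing an external lemma) and the case where they agree but the per-action shifts $\gamma_{a_1}\neq\gamma_{a_2}$ force $\hat V - V$ to be a different affine function on two full-dimensional cells. You instead argue directly: you extract $\gamma(\theta)=\varphi(\delta_\theta)$ from the vertices of the simplex, show the loci of affinity (hence the subdivisions and, via genericity, the action sets) coincide, and then upgrade agreement of two affine functions on a full-dimensional cell to agreement on all of $\Delta$. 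Your route is more self-contained --- it does not lean on the cited Lemma 2.1 --- and it makes explicit two points the paper leaves implicit, namely why the matching of cells yields a bijection of actions rather than merely of polytopes, and why a single $\gamma$ serves every undominated action. The paper's contrapositive is slightly shorter but correspondingly less constructive.
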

    \begin{proof}
        \(\left(\Rightarrow\right)\) If \(\hat{u}(a,\theta) = u(a,\theta) + \gamma(\theta)\), then
        \[\hat{V}(x) = \max_{a \in A}\mathbb{E}_{x}\hat{u}\left(a,\theta\right) = \max_{a \in A}\mathbb{E}_{x}\left[u(a,\theta) + \gamma(\theta)\right] = V(x) + \mathbb{E}_{x}\gamma(\theta) = 
        V(x) + \varphi(x)\text{.}\]

        \smallskip

        \noindent \(\left(\Leftarrow\right)\) Suppose for the sake of contraposition that there is no \(\gamma\) such that \(\hat{u}(a,\theta) = u(a,\theta) + \gamma(\theta)\) for all undominated actions. If the subdivisions induced by \(V\) and \(\hat{V}\) are not the same, we are done (see, e.g., Lemma 2.1 in \cite*{flexibilitypaper}). Now suppose they are the same, with the same actions corresponding to each element (by relabeling). By assumption, there exist two undominated actions \(a_1\) and \(a_2\) for which
        \(\hat{u}(a_1,\theta) = u(a_1,\theta) + \gamma_{a_1}(\theta)\) and \(\hat{u}(a_2,\theta) = u(a_2,\theta) + \gamma_{a_2}(\theta)\), where \(\gamma_{a_1} \neq \gamma_{a_2}\). Thus, \[\mathbb{E}_{x}\gamma_{a_1}(\theta) = \varphi_1(x) \neq \varphi_2(x) = \mathbb{E}_{x}\gamma_{a_1}(\theta)\text{,}\]
        so
        \(\hat{V}(x) = V(x) + \varphi_1(x)\) for all \(x \in C_1\) and \(\hat{V}(x) = V(x) + \varphi_2(x)\) for all \(x \in C_2\). \end{proof}

    Now that we know we can find a collection of ordered expectations that identifies \(C\) we turn our attention to identifying \(V\). The crucial part of the proof of the next result is the observation that \(C\) pins down \(V\) (plus affine \(\varphi\)) up to one degree of freedom, but that this last remaining indeterminacy cannot be eliminated by a collection of ordered expectations.

    \begin{definition}
    A collection of ordered expectations identifies value function \(V\), if there exists a value function that satisfies the collection, and any \(W\) that satisfies the collection equals \(V\) plus some affine function \(\varphi\).
\end{definition}

\begin{proposition}\label{nonidentification}
    For any non-affine \(V\), no collection of ordered expectations identifies it.
\end{proposition}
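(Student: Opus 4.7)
The plan is, given any non-affine $V$ and any collection of ordered expectations that allegedly identifies it, to exhibit another value function $W$ satisfying the same collection that is not of the form $V + \varphi$ for affine $\varphi$, contradicting identification. My candidate is $W \coloneqq \alpha V$ for any $\alpha > 0$ with $\alpha \neq 1$.

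Suppose for contradiction that some collection identifies the non-affine $V$. The first clause of the definition delivers a value function $W_0$ satisfying it, and the second clause forces $W_0 = V + \varphi_0$ for some affine $\varphi_0$; since Bayes-plausibility gives $\mathbb{E}_{F} \varphi_0(X) = \varphi_0(\mu)$ for every distribution appearing in the collection, $V$ and $W_0$ produce identical ordered expectations, so $V$ itself also satisfies the collection. I would then check three things about the candidate $W = \alpha V$. First, $W$ is a legitimate value function in the prescribed class: writing $V(x) = \max_{a \in A} \mathbb{E}_x u(a,\theta)$, we have $W(x) = \max_{a \in A} \mathbb{E}_x [\alpha u(a,\theta)]$, the value function of the rescaled utility $\alpha u$ on the same action set, hence the maximum of finitely many affine functions. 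Second, $W$ satisfies the collection: linearity gives $\mathbb{E}_{F_i} W(X_i) - \mathbb{E}_{F_j} W(X_j) = \alpha \bigl(\mathbb{E}_{F_i} V(X_i) - \mathbb{E}_{F_j} V(X_j)\bigr)$, and the factor $\alpha > 0$ preserves both strict inequalities and equalities. Third, $W - V = (\alpha - 1) V$ is not affine (since $V$ is non-affine and $\alpha \neq 1$), so $W$ is not of the form $V + \varphi$ for any affine $\varphi$, which contradicts identification.

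I do not anticipate any major obstacle. The conceptual point is that the residual indeterminacy in $V$ left over by ordered expectations is multiplicative, not merely affine: positive rescaling preserves every strict inequality and every equality among expected values, yet alters a non-affine $V$ by more than an affine function. The only place that needs care is verifying that the spoiler $\alpha V$ remains in the class of value functions, and that is handled cleanly by realizing it as the value function of the rescaled utility $\alpha u$ on the same action set.
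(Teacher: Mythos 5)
Your proposal is correct and uses essentially the same argument as the paper: the paper also observes that any positive rescaling $\psi V$ ($\psi>0$) satisfies the same collection of ordered expectations, and differs from the non-affine $V$ by the non-affine function $(\psi-1)V$. Your extra care in passing from ``some $W_0$ satisfies the collection'' to ``$V$ itself satisfies the collection'' via Bayes-plausibility of the posterior distributions is a detail the paper leaves implicit, but it is the same route.
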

\begin{proof}
    Fix an arbitrary \(V\). From Lemma \ref{cidentify}, there exists a collection of ordered expectations that identifies the \(C\) it induces. On any \(C_i \in C\), \(V(x) = \alpha_i \cdot x + \beta_i\). It is immediate that if \(V\) satisfies a collection of ordered expectations and induces \(C\), then so too does any \(\psi V\) (\(\psi > 0\)). As \(V\) is not affine, we are done. \end{proof}

    In some sense this result is to be expected. How could purely ordinal data identify \(V\)? Nevertheless, a few pieces of cardinal information will do. We term an equality of the form \(\mathbb{E}_{F_i} V(X_i) = \mathbb{E}_{F_j} V(X_j) + \zeta_l\) a utility difference.

    \begin{definition}
    A collection of ordered expectations and utility differences identifies value function \(V\), if there exists a value function that satisfies the collection, and any value function that satisfies the collection equals \(V\) plus some affine function \(\varphi\).
\end{definition}

    \begin{theorem}\label{theorem}
    For any \(V\), there exists a collection of ordered expectations and utility differences that identifies it.
\end{theorem}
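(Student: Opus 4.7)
The plan is to augment the ordered-expectation collection provided by Lemma~\ref{cidentify} with a finite number of utility differences, cutting the residual ambiguity in $V$ down to an affine summand. Let $E$ denote the (finite-dimensional) vector space of continuous functions on $\Delta$ that are affine on each cell of $C$, and let $A \subset E$ be the $n$-dimensional subspace of affine functions on $\Delta$. Fix the ordered expectations from Lemma~\ref{cidentify} applied to $C$; any value function $W$ satisfying them induces $C$ and therefore lies in $E$, so it only remains to force $W - V \in A$.

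Observe that a utility difference $\mathbb{E}_{F_k} V(X_k) = \mathbb{E}_{\hat{F}_k} V(\hat{X}_k) + \zeta_k$ (with $\zeta_k$ set by $V$) is satisfied by $W \in E$ iff $\ell_k(W - V) = 0$, where $\ell_k(g) := \mathbb{E}_{F_k} g - \mathbb{E}_{\hat{F}_k} g$. Because $F_k$ and $\hat{F}_k$ are both $\mu$-mean posterior distributions, $\ell_k$ annihilates $A$ (affine functions integrate to their value at the mean) and descends to a functional on the finite-dimensional quotient $E/A$. It therefore suffices to choose finitely many pairs $(F_k, \hat{F}_k)$ whose induced functionals on $E/A$ have trivial joint kernel.

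The key step is to show that the class of functionals $g \mapsto \mathbb{E}_F g - \mathbb{E}_{\hat F} g$, ranging over finitely supported $F, \hat F$ on $\Delta$ with mean $\mu$, spans $(E/A)^*$; equivalently, any $g \in E$ killed by every such functional is in $A$. First, taking $F = \delta_\mu$ and $\hat{F} = p \delta_y + (1-p) \delta_z$ with $p y + (1-p) z = \mu$ forces $g$ to be affine along every chord through $\mu$, so $h(v) := g(\mu + v) - g(\mu)$ is homogeneous and odd. Second, three-point distributions with weights $(\alpha, \alpha, 1-2\alpha)$ on displacements $(v_1, v_2, -\alpha(v_1+v_2)/(1-2\alpha))$ (all small enough to stay in $\Delta$) yield $h(v_1 + v_2) = h(v_1) + h(v_2)$, so $h$ is linear and $g$ is affine. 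Finite-dimensionality of $E/A$ then allows selecting finitely many $\ell_k$ whose joint kernel is exactly $A$; setting $\zeta_k := \mathbb{E}_{F_k} V - \mathbb{E}_{\hat{F}_k} V$ and appending the corresponding utility differences to the Lemma~\ref{cidentify} collection yields the desired identifying collection.

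The main obstacle will be the separation claim above. It is not a priori clear that functionals built only from pairs of finitely supported mean-$\mu$ distributions---a rather constrained class, since we cannot freely move the mean---are rich enough to detect every non-affine element of $E$, and the successive two-point and three-point argument is needed precisely to work within that shared-mean constraint.
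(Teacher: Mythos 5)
Your proposal is correct, but it takes a genuinely different route from the paper. The paper's proof is local and geometric: after Lemma~\ref{cidentify} fixes the subdivision, it normalizes \(V=0\) on one cell \(C_i\) and observes that on an adjacent cell \(C_j\) the affine piece must vanish on the \((n-2)\)-dimensional facet \(C_i\cap C_j\), hence is determined up to a single positive scalar \(\psi\); one utility difference built from a binary mean-\(\mu\) comparison straddling the facet then pins down \(\psi\), and the argument propagates across the finitely many adjacent pairs. Your proof is instead a duality argument on the finite-dimensional space \(E/A\): you show that the functionals \(g\mapsto\mathbb{E}_F g-\mathbb{E}_{\hat F}g\) over mean-\(\mu\) pairs annihilate affine functions and jointly separate \(E/A\) (via the two-point argument giving oddness and positive homogeneity of \(h(v)=g(\mu+v)-g(\mu)\), and the three-point argument giving additivity), so finitely many of them already have joint kernel \(A\); setting each \(\zeta_k\) to the value realized by \(V\) finishes the job. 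Both arguments are sound. The paper's buys an explicit, small collection (essentially one cardinal datum per facet of the subdivision) at the cost of an implicit connectivity/propagation step across the dual graph of \(C\); yours buys a cleaner logical structure that never needs the ``one degree of freedom per facet'' geometry or the propagation, at the cost of being non-constructive about which pairs \((F_k,\hat F_k)\) to select (you only know a finite separating subset exists by finite-dimensionality) and of requiring the separation lemma, which you do prove correctly. One cosmetic remark: your separation lemma does not actually use that \(g\in E\); it shows any function killed by all such functionals is affine, and membership in \(E\) is needed only for the finite-dimensionality step.
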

\begin{proof}
    Fix an arbitrary \(V\) and its induced \(C\). Take any distinct \(C_i, C_j \in C\) with \(\dim \left(C_i \cap C_j\right) = n-2\). WLOG we may specify \(V = 0\) for all \(x \in C_i\), as our identification is up to the addition of an affine function. By construction, \(V(x) = \psi \left(\alpha \cdot x + \beta\right)\) (\(\alpha \in \mathbb{R}^{n-1}\), \(\beta \in \mathbb{R}\), \(\psi \in \mathbb{R}_{++}\)) for all \(x \in C_j\), with \(\alpha \cdot x + \beta \geq 0\) for all \(x \in C_j\). Moreover, \(\alpha \cdot x + \beta > 0\) for all \(x \in C_j \setminus C_i\).

    Take two points \(x_i \in \inter C_i\) and \(x_j \in \inter C_j\) such that the set \(\lambda x_i + \left(1-\lambda\right) x_j \in C_i \cup C_j\) for any \(\lambda \in \left[0,1\right]\). Let \(\hat{x} = \inter \conv \left\{x_i,x_j\right\} \cap \inter C_i\). WLOG \(\mu \in \inter \conv \left\{\hat{x},x_j\right\}\). Then, letting \(F\) be the binary distribution with support \(\left\{x_i,x_j\right\}\) and \(\hat{F}\) be the binary distribution with support \(\left\{\hat{x},x_j\right\}\), \(\mathbb{E}_F V(X) > \mathbb{E}_{\hat{F}} V(\hat{X})\). Define \(\zeta = \mathbb{E}_F V(X) - \mathbb{E}_{\hat{F}} V(\hat{X}) > 0\).

    Finally, observe that, letting \(p\) and \(q\) be defined implicitly as
    \[p x_j + (1-p) x_i = \mu = q x_j + (1-q) \hat{x}\text{,}\]
    the equation (in \(\psi\))
    \[p \psi \left(\alpha \cdot x_j + \beta\right) = q \psi \left(\alpha \cdot x_j + \beta\right) + \zeta\]
    has a unique solution 
    \[\psi = \frac{\zeta}{(p-q)\left(\alpha \cdot x_j + \beta\right)} > 0 \text{.}\] Consequently, we have identified \(V\) on \(C_j\) via this single equality. As there are only finitely many pairs of \(C_i\) and \(C_j\), we are done. \end{proof}

    As identification of \(V\) yields the value of information (in utils), \(\mathbb{E}_F V - V(\mu)\), 
    \begin{corollary}\label{corollary}
        There exists a collection of ordered expectations and utility differences that identifies the value of information for the DM.
    \end{corollary}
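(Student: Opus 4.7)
The plan is to invoke Theorem \ref{theorem} directly and then verify that the one remaining ambiguity it leaves, namely addition of an affine function, has no effect on the value of information. This reduces the corollary to a one-line calculation once the right observation is made.

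First, I would apply Theorem \ref{theorem} to produce a collection of ordered expectations and utility differences whose solutions are exactly the functions of the form $W = V + \varphi$ with $\varphi$ affine. The claim is that this same collection identifies the value of information, i.e., that any admissible $W$ yields exactly the same number $\mathbb{E}_F W(X) - W(\mu)$ for every experiment.

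The key observation is the martingale property of Bayesian posteriors: for any experiment $\pi$ with induced distribution over posteriors $F = B(\mu,\pi)$, we have $\mathbb{E}_F X = \mu$. Consequently, for any affine $\varphi \colon \Delta \to \mathbb{R}$, $\mathbb{E}_F \varphi(X) = \varphi(\mathbb{E}_F X) = \varphi(\mu)$, so
\[\mathbb{E}_F W(X) - W(\mu) = \mathbb{E}_F V(X) + \varphi(\mu) - V(\mu) - \varphi(\mu) = \mathbb{E}_F V(X) - V(\mu)\text{.}\]
Thus the value of information is invariant under the affine indeterminacy permitted by Theorem \ref{theorem}, and is therefore pinned down by the identifying collection.

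There is no serious obstacle here; the only thing to be careful about is to note that the identification provided by Theorem \ref{theorem} is exactly up to affine functions on $\Delta$, and that the martingale property of Bayesian updating is what makes such affine perturbations wash out when evaluating $\mathbb{E}_F W(X) - W(\mu)$. Everything else is immediate from the theorem.
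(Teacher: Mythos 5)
Your proposal is correct and is exactly the argument the paper intends: the corollary is presented as an immediate consequence of Theorem \ref{theorem} via the remark that identification of \(V\) yields the value of information \(\mathbb{E}_F V - V(\mu)\), and your observation that the affine indeterminacy washes out because \(\mathbb{E}_F X = \mu\) forces \(\mathbb{E}_F \varphi(X) = \varphi(\mu)\) is precisely the (unstated) justification for that remark. No difference in approach; you have simply made explicit the one-line calculation the paper leaves implicit.
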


\section{The Irrelevance of the Prior}

Until now we have assumed that we know the DM's prior. What if we do not? Take any (full-support) prior \(\mu \in \inter \Delta\) and belief \(x \in \Delta\), that corresponds to the realization, \(s\) of a statistical experiment. It is well-known that the ratios of conditional probabilities of that \(s\) (conditioned on the states) corresponding to that belief \(x\) are well-defined functions of the prior. Consequently, for any prior \(\mu \in \inter \Delta\) and any decision problem whose value function induces \(C\), the extreme points of any \(C_i \in C\) correspond uniquely to the ratios of conditional probabilities of a number of columns (one column per extreme point of \(C_i\)) in a statistical experiment.

To put differently, for any \(\mu \in \inter \Delta\), value function \(V\) that induces \(C\), and any \(C_i \in C\), the set of extreme points of \(C_i\) corresponds uniquely to a family of sub-stochastic matrices, \(\tilde{C}_i\), defined as follows. Denoting by \(\underline{\Delta}(S)\) the set of sub-probability measures on a set of signal realizations \(S\), and \(x_s\) the DM's posterior after observing signal realization \(s\),
\[\tilde{C}_i \coloneqq \left\{\tilde{\pi} \colon \Theta \to \underline{\Delta}(S_i) \colon x_s \in \extreme (C_i) \ \forall s \in S_i, \ \cup_{s \in S_i} \left\{x_s\right\} = \extreme (C_i) \right\}\text{.}\]

By construction, for a decision problem whose value function induces subdivision \(C\), at prior \(\mu \in \inter \Delta\), the DM is indifferent between any experiment \(\pi\) possessing \(\tilde{\pi} \in \tilde{C}_i\) as a submatrix and any experiment \(\tilde{\pi}'\) that is the same as \(\pi\) except that \(\tilde{\pi}\) is replaced by an arbitrary garbling. We term \(\tilde{C}_i\) a spectral element, and the set \(\tilde{C} \coloneqq \left\{\tilde{C}_1, \dots, \tilde{C}_t\right\}\) the spectral subdivision. Importantly, a spectral subdivision is prior-free: it is a set of families of matrices.

We call a finite collection of Blackwell experiments \(\pi_1, \dots, \pi_m\) and rankings of the form \(\pi_i \succ \pi_j\) and \(\pi_k \sim \pi_l\) (\(i, j, k, l \in \left\{1, \dots, m\right\}\)) a collection of ranked experiments. 

\begin{definition}
    A collection of ranked experiments identifies spectral subdivision \(\tilde{C}\) if for any prior \(\mu \in \inter \Delta\) there exists a decision problem that satisfies the collection that induces \(\tilde{C}\), and any decision problem and prior that satisfies the collection induces \(\tilde{C}\).
\end{definition}

We immediately deduce a corollary of Lemma \ref{cidentify}.
\begin{corollary}\label{spectral}
    For any spectral subdivision \(\tilde{C}\), there exists a collection of ranked experiments that identifies it.
\end{corollary}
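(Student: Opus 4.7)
The plan is to translate the identifying collection of ordered expectations from Lemma \ref{cidentify} into a prior-free collection of ranked experiments, using the spectral elements of $\tilde{C}$ as combinatorial templates. The guiding fact is that the sub-stochastic matrices in $\tilde{C}_i$ produce, under \emph{any} full-support prior, exactly the extreme points of the corresponding cell $C_i$ in the Bayes-image subdivision, and that garblings/refinements of sub-stochastic matrices are Bayes-equivariant in $\mu$.

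For each spectral element $\tilde{C}_i$, I would pick a representative $\tilde{\pi}_i \in \tilde{C}_i$, embed it as a sub-experiment of a full Blackwell experiment $\pi_i$, and form $\hat{\pi}_i$ by garbling the columns of $\tilde{\pi}_i$ into a single signal. Under any prior $\mu \in \inter \Delta$, $\pi_i$ then produces posteriors at the extreme points of $C_i$, while $\hat{\pi}_i$ produces a posterior at a strictly positive convex combination of those points, which lies in $\inter C_i$. Consequently, declaring $\pi_i \sim \hat{\pi}_i$ is equivalent to $V$ being affine on $C_i$, prior-independently, by exactly the argument in Lemma \ref{cidentify}. For each adjacent pair $\tilde{C}_i, \tilde{C}_j$ (with shared columns corresponding to the extreme points of the codimension-one face $F = C_i \cap C_j$), construct $\pi_{ij}$ whose distinguished column is $c = c_i + c_j$, where $c_i$ is a positive sub-stochastic combination of all columns of $\tilde{\pi}_i$ with a dominant weight on the columns shared with $\tilde{\pi}_j$, and $c_j$ is defined analogously; let $\hat{\pi}_{ij}$ refine this distinguished signal into its two summands. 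Under any prior, the refined posteriors lie in $\inter C_i$ and $\inter C_j$, and for a sufficiently dominant shared-weight they stay close enough to $F$ that the segment joining them remains within $C_i \cup C_j$, so declaring $\hat{\pi}_{ij} \succ \pi_{ij}$ is equivalent to $V$ being non-affine on $C_i \cup C_j$.

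Assembling these rankings yields the desired collection: for any prior $\mu \in \inter \Delta$ and any decision problem whose value function induces the Bayes image $C$ of $\tilde{C}$, all rankings hold by the above analysis, giving existence; conversely, any (decision problem, prior) satisfying the collection generates, via the Bayes map, ordered expectations that satisfy the collection identifying the corresponding $C$, and Lemma \ref{cidentify} delivers that the decision problem induces $\tilde{C}$. The main obstacle is the last technical point of the previous paragraph: verifying that the segment-in-$(C_i \cup C_j)$ property holds \emph{uniformly} across all full-support priors. This can be handled by choosing the non-shared weights small enough relative to the shared weights and, if necessary, including several such experiments per adjacent pair (with decreasing weights) so that for any prior at least one of them gives split posteriors close enough to $F$ for the segment to remain within $C_i \cup C_j$.
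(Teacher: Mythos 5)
Your proposal is correct and follows essentially the same route as the paper, which states the corollary as an immediate consequence of Lemma \ref{cidentify} via the prior-free correspondence between spectral elements and the extreme points of the cells; you simply spell out the translation of the mean-preserving contractions and spreads into garblings and refinements of sub-experiments. Your worry about the split posteriors' segment staying in \(C_i \cup C_j\) is imported from the proof of Theorem \ref{theorem} rather than Lemma \ref{cidentify} and is not actually needed for the affine/non-affine dichotomy, but addressing it does no harm.
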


We can also add some cardinal information.  Recall that for a Blackwell experiment \(\pi\), \(W(\pi) \in \mathbb{R}\) denotes the DM's Expected payoff in her decision problem from first observing the outcome of experiment \(\pi\) then taking an optimal action. We term an equality of the form \(W(\pi_i) = W(\pi_j) + \zeta_l\) a utility difference. Recall from Corollary \ref{corollary} that if we knew \(\mu\), we could identify the DM's value for information. It turns out, we do not even need to know \(\mu\) to price (in utils) information for the agent. That is, the precise utility gain from \textit{any} experiment is identified. 

We normalize the value of no information to \(0\) so that the value of information of experiment \(\pi\) is merely \(W\left(\pi\right)\). Then,

\begin{definition}
    A collection of ranked experiments and utility differences identifies the value of information \(W\) if for any prior \(\mu \in \inter \Delta\) there exists a decision problem such that the value of an experiment \(\pi\) is \(W(\pi)\), and any decision problem and prior satisfying the collection has value of information \(W\).
\end{definition}

\begin{theorem}
    There exists a collection of ranked experiments and utility differences that identifies the value of information for the agent.
\end{theorem}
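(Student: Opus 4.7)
The plan is to adapt the proof of Theorem \ref{theorem} to the prior-free setting, combining Corollary \ref{spectral} with utility differences whose construction is lifted from \(\Delta\)-space specifications of experiments to the prior-free language of stochastic matrices drawn from the spectral elements.

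First, by Corollary \ref{spectral}, there is a finite collection of ranked experiments identifying the spectral subdivision \(\tilde{C}\). Any decision problem and prior satisfying these rankings must induce \(\tilde{C}\), pinning down the combinatorial structure of the belief-simplex subdivision together with the sub-stochastic matrices realizing its extreme points.

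Second, mimicking the construction in Theorem \ref{theorem}, we append finitely many utility differences---one per adjacent pair of spectral elements \(\tilde{C}_i, \tilde{C}_j\). For each such pair, we select two experiments \(\pi^{ij}, \hat{\pi}^{ij}\), specified as stochastic matrices whose columns are drawn from \(\tilde{C}_i \cup \tilde{C}_j\). Under any prior consistent with \(\tilde{C}\), these experiments realize as binary posterior distributions supported at corners of \(C_i \cup C_j\), in exact analogy with the experiments \(F, \hat{F}\) in the proof of Theorem \ref{theorem}. We record the utility difference \(\zeta^{ij}\) that each pair produces for the true decision problem and prior.

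Third, we claim the collection identifies \(W\). For any \((u', \mu')\) satisfying the collection, the rankings force the induced subdivision to realize \(\tilde{C}\) at prior \(\mu'\), and the utility differences then pin down the cardinal scaling of the value function \(V'\) on each polytope via the single-variable equation from Theorem \ref{theorem}'s proof applied at \(\mu'\). Since affine shifts of \(V'\) do not affect the value of information, \(W_{u', \mu'}\) is uniquely determined. For existence at any prior: given \(\mu'\), scaling the appropriate affine functions on the polytopes of the subdivision realized from \(\tilde{C}\) produces a value function \(V'\), and hence a decision problem \(u'\), with \((u', \mu')\) satisfying the collection.

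The hard part, and the main obstacle to turning this sketch into a proof, is verifying that the numerical value of each \(\zeta^{ij}\)---chosen from the true decision problem---actually pins down a common \(W\) across priors. The key observation is that once \(\tilde{C}\) and the cardinal parameters are fixed, the formula \(W(\pi) = \mathbb{E}_F V - V(\mu)\) expresses \(W(\pi)\) entirely in terms of the spectral data, the scalars, and the stochastic matrix \(\pi\) (via Bayes' rule), producing a function of \(\pi\) invariant to the prior. Ensuring this invariance is what makes the utility differences identify the value of information \(W\) as a function on experiments, rather than merely a prior-specific value function.
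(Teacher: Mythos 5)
Your proposal is correct and follows essentially the same route as the paper: lift the ordinal collection from Lemma \ref{cidentify} (via Corollary \ref{spectral}) and the utility differences from Theorem \ref{theorem} to prior-free experiments, observe that they identify a value function \(V^\mu\) for each prior \(\mu \in \inter \Delta\), and conclude by the prior-invariance of the resulting value of information. The ``hard part'' you flag---that the fixed scalars \(\zeta^{ij}\) yield a common \(W\) across priors---is exactly the step the paper itself disposes of with the assertion that the value of information is the same ``by construction'' for all such priors, so your sketch is no less complete than the published argument.
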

\begin{proof}
    Take an arbitrary prior \(\mu' \in \inter \Delta\) and a decision problem that induces value function \(V\). By Theorem \ref{theorem} there exists a collection of ordered expectations and utility differences that identifies it and, hence, \(W\) for this DM. Moreover, take the collections specified in the proofs of Proposition \ref{cidentify} and Theorem \ref{theorem}, and observe that the ranked experiments corresponding to the ordered expectations and the utility differences identify value function \(V^\mu\) for each \(\mu \in \inter \Delta\). Moreover, by construction, for all such priors \(\mu\), the value of information in the corresponding decision problem (with value function \(V^\mu\)) is the same and, thus, equal to \(W\).

    Accordingly, it suffices to show that for an arbitrary prior \(\mu' \in \inter \Delta\), there exists a collection of ranked experiments and utility differences that identifies value of information for the agent. This is Corollary \ref{corollary}, so we are done. \end{proof}

\section{Identifying a Two-Player Game}

There is a tight connection between two-player (simultaneous move) games and decision problems. In such games, the other player takes on the role of nature, generating any uncertainty by randomizing over his strategies. Given this, it is easy to construct analogs of the earlier identification results.

Formally, a finite two-player (normal form) game is \(G = \left(\left\{1,2\right\}, A, u\right)\) consisting of two players (\(1\) and \(2\)), a finite set of (pure) action profiles \(A = A_1 \times A_2\) and a profile of utilities for the players \(u = \left(u_1,u_2\right)\), with \(u_i \colon A \to \mathbb{R}\) (\(i = 1, 2\)). \(\Sigma = \Sigma_1 \times \Sigma_2\) denotes the set of (mixed) strategy profiles, with \(\Sigma_i = \Delta A_i\) (\(i = 1,2\)). As above, we stipulate that there are no duplicate actions for either player. As before, this is an imposition of genericity.

We define player \(1\)'s best response correspondence, in \(\sigma_2 \in \Sigma_2\), \(b_1\left(\sigma_2\right)\), in the standard manner:
\[b_1\left(\sigma_2\right) \coloneqq \left\{a_1 \in A_1 \colon \mathbb{E}_{\sigma_2}u\left(a_1,a_2\right) = \max_{a_1' \in A_1} \mathbb{E}_{\sigma_2}u\left(a_1',a_2\right)\right\}\text{.}\]
We define player \(2\)'s best response correspondence \(b_2\left(\sigma_1\right)\) analogously. We term a finite collection of the form \(b_1\left(\sigma_2^i\right) \cap b_1(\sigma_2^j) \neq \emptyset\), \(b_2\left(\sigma_1^i\right) \cap b_2(\sigma_1^j) \neq \emptyset\), \(b_1\left(\sigma_2^k\right) \cap b_1\left(\sigma_2^l\right) = \emptyset\), and \(b_2\left(\sigma_1^k\right) \cap b_2\left(\sigma_1^l\right) = \emptyset\) a collection of best-response comparisons.

As in the decision problem above, \(\Sigma_2\) is subdivided into a finite collection \(C_1\) of polytopes of full dimension \(C_1^i\), where for each undominated \(a_1 \in A_1\), 
\[C_1^i \coloneqq \left\{\sigma_2 \in \left.\Sigma_2 \right| a_1^i \in b_1(\sigma_2)\right\}\text{.}\]
\(\Sigma_1\) is likewise subdivided into collection \(C_2\). We call \(C_i\) player \(i\)'s subdivision and say that the game induces subdivision pair \((C_1, C_2)\) if the game is such that player \(1\)'s subdivision is \(C_1\) and player \(2\)'s subdivision is \(C_2\).
\begin{definition}
    A collection of best-response comparisons identifies game \(G\) if there exists a game whose best-response correspondences satisfy the collection, and any game whose best-response correspondences satisfy the collection induces the same subdivision pair \((C_1, C_2)\).
\end{definition}
Mirroring the argument for Lemma \ref{cidentify}, we have
\begin{theorem}
    For any game there exists a collection of best-response comparisons that identifies it.
\end{theorem}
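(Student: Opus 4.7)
The plan is to mirror the two-step argument of Lemma \ref{cidentify} separately for each player, using the dictionary that best-response comparisons are cell-membership queries on the subdivisions. Specifically, for any $\sigma_2^i, \sigma_2^j \in \Sigma_2$, the relation $b_1(\sigma_2^i) \cap b_1(\sigma_2^j) \neq \emptyset$ holds if and only if $\sigma_2^i$ and $\sigma_2^j$ share a common cell of player $1$'s subdivision $C_1$; the emptiness relation holds if and only if they lie in no common cell. Best-response comparisons thus give a more direct handle on $C_1$ than do the value-function-based ordered expectations of Lemma \ref{cidentify}, and the task is to encode the geometry of $C_1$ (and symmetrically $C_2$) through finitely many such comparisons.

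I would construct the collection for $C_1$ in three stages; the construction for $C_2$ is symmetric, and collecting both yields the claim. First, for each cell $C_1^k$ I would fix a witness point $\sigma_2^k \in \inter C_1^k$, at which $b_1(\sigma_2^k) = \{a_1^k\}$ is a singleton. For every pair $k \neq l$, include the empty-intersection comparison $b_1(\sigma_2^k) \cap b_1(\sigma_2^l) = \emptyset$. This analog of the mean-preserving-spread step of Lemma \ref{cidentify} forces any satisfying game to have at least $t_1$ distinct best-response cells, one per witness.

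Second, for each $v \in \extreme C_1^k$, I would include $b_1(v) \cap b_1(\sigma_2^m) \neq \emptyset$ for every $m$ with $v \in C_1^m$, and $b_1(v) \cap b_1(\sigma_2^m) = \emptyset$ for every $m$ with $v \notin C_1^m$. Because each cell of a regular polyhedral subdivision is the convex hull of its extreme points, this stage pins each cell down as the convex hull of a determined set of vertices. Third, to force any satisfying subdivision to share the exact separating facets of $C_1$ (not merely the vertex sets of individual cells), include, for each codim-$1$ face $C_1^k \cap C_1^l$, a collection of $\dim(C_1^k \cap C_1^l)+1$ affinely independent probe points in $\inter(C_1^k \cap C_1^l)$; for each such probe $\sigma_2^{kl,s}$, include both $b_1(\sigma_2^{kl,s}) \cap b_1(\sigma_2^k) \neq \emptyset$ and $b_1(\sigma_2^{kl,s}) \cap b_1(\sigma_2^l) \neq \emptyset$. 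These probes affinely span the separating hyperplane and rigidify it in any satisfying subdivision.

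The main obstacle will be the third stage: rigorously arguing that the probe-point comparisons, together with the extreme-point memberships, rule out every regular polyhedral subdivision that happens to match all the cell-membership queries but has a different facet structure. I expect to close this by invoking the fact that a regular polyhedral subdivision is determined by its vertices together with the affine hyperplanes separating adjacent cells, both of which are pinned down by the preceding stages. Applying the construction symmetrically on $\Sigma_1$ then produces a finite collection of best-response comparisons that identifies the subdivision pair $(C_1, C_2)$, as required.
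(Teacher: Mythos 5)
Your translation of best-response comparisons into cell-membership statements is right, and stages~1 and~3 are harmless, but stage~2 has a fatal gap: pairwise comparisons between each vertex \(v\) of \(C_1^k\) and the witness \(\sigma_2^k\) only certify that \emph{each} vertex shares \emph{some} best response with \(\sigma_2^k\); they do not certify a \emph{single} action that is a best response at all vertices of \(C_1^k\) simultaneously, which is what you need (via convexity of \(\left\{\sigma_2 : a_1 \in b_1(\sigma_2)\right\}\)) to conclude that one action's cell contains all of \(C_1^k\). The relation ``shares a best response with'' is not transitive, and in a satisfying game the best-response set at a witness need not be a singleton, so your appeal to \(b_1(\sigma_2^k)=\{a_1^k\}\) does not transfer. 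Concretely: let player \(2\) have two pure actions, identify \(\Sigma_2\) with \([0,1]\), and let the true subdivision be \(C_1 = \left\{[0,\tfrac12],[\tfrac12,1]\right\}\) with witnesses \(\sigma_2^1=\tfrac14\) and \(\sigma_2^2=\tfrac34\). A game with three undominated actions \(w_0,w_1,w_2\) whose cells are \([0,\tfrac14]\), \([\tfrac14,\tfrac12]\), \([\tfrac12,1]\) satisfies every comparison you list --- e.g.\ \(b_1(\tfrac14)=\{w_0,w_1\}\) meets \(b_1(0)=\{w_0\}\) and \(b_1(\tfrac12)=\{w_1,w_2\}\), while \(b_1(\tfrac14)\cap b_1(\tfrac34)=\emptyset\) --- yet induces a different subdivision. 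Your stage~3 does not repair this: its probe comparisons again only pair facet points with witnesses, and in this example the unique probe point is \(\tfrac12\), whose comparisons are already implied by stage~2.

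The paper's proof avoids this by using, for each cell, a \emph{single} nonempty intersection of the best-response sets at \emph{all} of its vertices together with one interior point, \(b(\sigma_2^1)\cap\dots\cap b(\sigma_2^k)\cap b(\sigma_2^*)\neq\emptyset\); that one statement directly exhibits a common action over the whole cell, after which empty pairwise intersections at interior points of distinct cells separate the cells. (Note this goes beyond the paper's literal definition of a comparison, which is pairwise --- a looseness in the paper itself --- but the counterexample above shows that your particular pairwise scheme cannot substitute for it.) To fix your argument you would either adopt the multi-way intersections, or replace stage~2 with pairwise statements that chain correctly, e.g.\ comparisons forcing a common best response along each edge of a triangulation of \(C_1^k\) all of whose simplices share a vertex at which the best-response set is pinned down; as written, the proposal does not establish identification.
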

\begin{proof}
    As the proof is virtually identical to that for Lemma \ref{cidentify}, we merely sketch it.\footnote{In fact it is even easier, as we are no longer constrained by Bayes' law.} Take an arbitrary element \(C_1^i \in C_1\). It is a polytope with a finite set of extreme points (its vertices) \(\left\{\sigma_2^1, \sigma_2^2, \dots, \sigma_2^k\right\}\). Accordingly, for some \(\sigma_2^* \in \inter C_1^i\) the collection
    \[b(\sigma_2^1) \cap b(\sigma_2^2) \cap \dots \cap b(\sigma_2^k) \cap b(\sigma_2^*) \neq \emptyset\] implies that there is an action that is a best response to any \(\sigma_2 \in C_1^i\). There are only finitely many elements in \(C_1\), so finitely many more such collections imply that there are actions that are best responses, respectively, to any \(\sigma_2 \in C_1^i\) for any \(C_1^i \in C_1\). Finally, for any distinct pair \(C_1^i\) and \(C_1^j\) (of which there are finitely many) a comparison 
    \[b(\sigma_2^\heartsuit) \cap b(\sigma_2^\diamond) = \emptyset \text{,}\]
    where \(\sigma_2^\heartsuit \in \inter C_1^i\) and  \(\sigma_2^\diamond \in \inter C_1^j\), distinguishes between the respective best responses.\end{proof}

\section{Related Work}

This paper belongs among those studying the value of information in decision problems. That literature includes \cite*{ram}, \cite*{athey}, and \cite*{azrieli}; as well as \cite*{radstig}, \cite*{delara}, \cite*{Chade}, and \cite*{dentist}, which four all study the marginal valuation of information. We also make use of the connection between decision problems and polyhedral subdivisions. \cite*{greenosband} reveal this connection, and \cite*{lambert} and \cite*{frongillo2021general} make use of it when studying elicitability of various forms of information.

A number of other works seek to understand what information a DM obtains based on her choice behavior in a (known) decision problem. \cite*{dillenberger2014theory} and \cite*{lu2019bayesian} are two such papers. \cite*{de2017rationally} and \cite*{caplin2015revealed} are two others, both of which endogenize the DM's information structure. \cite*{libgober2021hypothetical} shows how to identify information structures from a population's beliefs.

Finally, this work is fundamentally a revealed preference exercise (\cite*{samuelson1938note} and \cite*{afriat1967construction}): given a choice paradigm what can a set of observed choices by an agent tell you about the agent (and the model)? More recently, \cite*{caplinmartin} and \cite*{caplin2023rationalizable} ask what choice data reveals about the information a DM could have obtained and \cite*{caplin2015revealed} and \cite*{chambers2020costly} whether a DM's behavior is consistent with costly information acquisition. This paper, instead, asks what preferences over information structures reveal about a DM's choice problem.

\bibliography{sample.bib}

\end{document}